\newtheorem{theorem}{\bf Theorem}[section]
\newtheorem{proposition}[theorem] {\bf Proposition}
\newtheorem{definition} [theorem] {\bf Definition}
\newtheorem{example}    [theorem] {\bf Example}
\newenvironment{proof}{\noindent\mbox{\textbf{Proof} 
}}{\rm\hspace*{\fill}$\rule{7pt}{7pt}$\vspace{10pt}}
\newcommand{\N}{\mathbb{N}}
\newcommand{\R}{\mathbb{R}}
\newcommand{\Ztwo}{\mathbb{Z}_2}
\newcommand{\subsets}{\Delta}
\newcommand{\Ssubsets}{(S,\subsets)}
\newcommand{\SSsubsets}{\mathcal{S}(S,\subsets)}
\newcommand{\symdif}{\bigtriangleup}
\newcommand{\statespace}{\mathop{\mathcal{S}}}
\newcommand{\uint}{\left<0,1\right>}
\title{\bf Quantum logics that are symmetric-difference-closed}
\author{Dominika Bure\v{s}ov\'{a} and Pavel Pt\'{a}k}
\date{}
\begin{document}

\maketitle

\begin{abstract}
    In this note we contribute to the recently developing study of ``almost
        Boolean'' quantum logics (i.e. to the study of orthomodular partially
        ordered sets that are naturally endowed with a symmetric difference).
    We call them enriched quantum logics (EQLs).
    We first consider set-representable EQLs.
    We disprove a natural conjecture on compatibility in EQLs.
    Then we discuss the possibility of extending states and prove an extension
        result for $\Ztwo$-states on EQLs.
    In the second part we pass to general orthoposets with a symmetric
        difference (GEQLs).
    We show that a simplex can be a state space of a GEQL that has an
        arbitrarily high degree of noncompatibility.
    Finally, we find an appropriate definition of a ``parametrization'' as a
        mapping between GEQLs that preserves the set-representation.
\end{abstract}

\noindent AMS Classification: \textit{06C15, 03612, 81B10}.

\noindent Keywords and phrases:
    \textit{Quantum logic, Symmetric difference, Boolean algebra, State}.

\section{Introduction}

We shall mostly deal with the following notion of an enriched quantum logic.
Though the text that follows is purely mathematical, let us note that the
    quantum logic is interpreted, in the algebraic foundation of quantum
    mechanics, as a model of the events of a quantum experiment (see
    e.g.~\cite{Gudder,PtakPulmannova}).\\

\begin{definition}
    Let $S$ be a set, $S\ne\emptyset$, and let $\subsets$ be a collection of subsets
    subject to the following conditions (we write $A\smallsetminus B = \{a\in A \ {\rm and} \ a \notin B\}$):
    \begin{enumerate}[(I.)]
        \item $S\in\subsets$,
        \item if $A,B\in\subsets$ and $\symdif$ stands for the symmetric
            difference on $S$, 
            $A\symdif B=(A\smallsetminus B)\cup(B\smallsetminus A)$,
            then $A\symdif B\in\subsets$.
    \end{enumerate}
\end{definition}

The pair $(S,\subsets)$ is said to be an \emph{enriched quantum logic} (an EQL).
We shall sometimes write $\subsets$ instead of $(S,\subsets)$ when we do not
    need refer to the~underlying set $S$.
It may be noted the the notation of EQL is justified.
The collection $\subsets$ indeed forms a set-representable quantum logic in the
    usual sense
    (the partial ordering is given by inclusion).
It is easy to verify that:, of $F$
    \begin{enumerate}[(I.)]
        \item $S\in\subsets$,
        \item $A\in\subsets$ implies $A'=S\smallsetminus A\in\subsets$
            ($S\smallsetminus A$ is the complement of $A$ in $S$),
        \item $A,B\in\subsets$ and $A\cap B=\emptyset$ implies $A\cup B\in\subsets$.
    \end{enumerate}

Obviously, if $A\subseteq B$ then $B=A\cup(B\cap(S\smallsetminus A))$, and so
    $\subsets$ is orthomodular (i.e. $\subsets$ is a QL).
It is easily seen that $(S,\subsets)$ is a Boolean algebra exactly when for any
    $A,B\in\subsets$ we have $A\cup B\in\subsets$.

\section{Results}

The questions addressed in this paper reflect the state of the art of algebraic
    theory of quantum logic.
Technically, it is a branch of orthomodular structures---partially ordered sets with 0 and 1 endowed with an orthocomplementation and subject to the orthomodular law (see
    e.g.~\cite{DeSimoneNavaraPtak, DorferDvurecenskijLanger, Ovchinnikov, MatousekPtak, Su, Voracek}).

The first result concerns the notion of compatibility in EQLs.
The notion of compatibility captures, in theoretical form, the common
    measurability of quantum events~\cite{Gudder, PtakPulmannova}.
Recall that a subset $M$ of an EQL $(S,\subsets)$ is said to be compatible if
    there is a Boolean subalgebra $B$ of $\subsets$ such that $M\subset B$.
Let us say that $(S,\subsets)$ is \emph{compatibility regular} (\emph{compreg})
    if the following implication holds true.
If $A=\{A_1, A_2, \dots, A_n\}\subset\subsets$, and if any subset of $A$ with
    strictly less than $n$ elements is compatible, then so is $A$.
It is known that $(S,\subsets)$ is compreg provided $\subsets$ is a lattice
    (see e.g.~\cite{PtakPulmannova}).
But there are several compreg non-lattice QLs, too. For instance such are several finite set-representable quantum logics (see
    e.g.~\cite{Harding} and logics of splitting subspaces of prehilbert spaces~\cite{PtakWeber}).
Since EQLs are richer than general QLs, it is conceivable that EQLs are compreg.
However,

\begin{theorem}\label{th:2.1}
    Let $S=\{1,2,\dots,2^{n-1},2^n\}$, $n\ge 3$.
    Then there is such an EQL $(S,\subsets)$ that is not compreg.
\end{theorem}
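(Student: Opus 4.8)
The plan is to construct an explicit enriched quantum logic on the set $S=\{1,2,\dots,2^n\}$ of $2^n$ points for which compatibility regularity fails. To do this I would exhibit a family $A=\{A_1,\dots,A_n\}$ of $n$ elements of $\subsets$ with the property that every proper subfamily (of size $<n$) is compatible, yet the full family $A$ is not compatible. The natural device is to make the $A_i$ pairwise compatible --- indeed, to make every proper subfamily sit inside a common Boolean subalgebra --- while arranging a global obstruction that prevents all $n$ of them from living in a single Boolean subalgebra. The governing principle is the remark in the excerpt: a set-representable logic $\subsets$ is Boolean precisely when it is closed under $\cup$; so to block compatibility of $A$ it suffices to show that the sets generated from the $A_i$ under intersection and union cannot all be accommodated in $\subsets$, whereas any $n-1$ of them can.

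First I would choose the generators cleverly. A clean way to encode ``compatible iff fewer than $n$'' is a parity construction: think of the $2^n$ points of $S$ as indexed by the vectors of $\Ztwo^n$ (or by subsets of an $n$-element index set), and let $A_i$ be the set of points whose $i$-th coordinate equals $1$. Then each $A_i$ is a ``coordinate half'' of the cube, and the symmetric differences $A_i\symdif A_j$ are exactly the parity sets the structure is built to contain. The second step is to define $\subsets$ to be the smallest collection containing these generators together with $S$ and closed under $\symdif$ (equivalently, the span over $\Ztwo$ of the indicator vectors of the $A_i$, viewed inside the $\Ztwo$-vector space of all subsets of $S$ under $\symdif$); condition (I.) holds because $S$ is included, and condition (II.) holds by construction, so $(S,\subsets)$ is a bona fide EQL. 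The key point is that this symmetric-difference span is a linear object and need not be closed under $\cup$, which is exactly what creates the failure of Booleanness.

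The third and central step is the verification of the two compatibility facts. For any proper subfamily $\{A_{i_1},\dots,A_{i_k}\}$ with $k\le n-1$ I would produce an explicit finite Boolean subalgebra of $\subsets$ containing it --- for instance the field of sets generated by a partition of $S$ into the atoms determined by the chosen $k$ coordinates, each block being a union of cube-slices that I must check actually lies in $\subsets$. Conversely, to see that the whole family $A=\{A_1,\dots,A_n\}$ is not compatible, I would argue that a Boolean subalgebra containing all $n$ coordinate-halves would have to contain the atom $A_1\cap A_2\cap\cdots\cap A_n$, a single point of $S$, and hence (by taking complements and unions) essentially the full power set; but the full power set is strictly larger than $\subsets$ (the symmetric-difference span has dimension roughly $n+1$, far smaller than $2^n$ for $n\ge 3$), giving a contradiction. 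The hypothesis $n\ge 3$ should enter precisely here, guaranteeing that $2^n$ genuinely exceeds the dimension of the span so that the required singletons or unions fall outside $\subsets$.

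The main obstacle I anticipate is the bookkeeping in the two halves of the third step: showing that each proper subfamily really does embed in a Boolean subalgebra \emph{inside} $\subsets$ (not merely in the ambient power set) requires checking that the relevant unions of cube-slices are themselves reachable as symmetric differences of the generators, and showing the full family is incompatible requires pinning down exactly which sets a putative common Boolean subalgebra would be forced to contain and confirming at least one of them escapes $\subsets$. I expect the cleanest route is to compute the dimension of the $\Ztwo$-span of $\{S,A_1,\dots,A_n\}$ and compare it against the size $2^n$ of any candidate Boolean algebra's atom structure, so that a pure counting or linear-algebra argument settles incompatibility of $A$ while compatibility of each proper subfamily follows from an explicit partition into blocks that are visibly symmetric-difference combinations of the generators.
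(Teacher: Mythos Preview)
Your coordinate-half generators $A_1,\dots,A_n$ are exactly the right sets, but your choice of $\subsets$ is too small, and the check you flag as ``the main obstacle'' in fact fails. In the $\Ztwo$-span of $\{S,A_1,\dots,A_n\}$ the elements are precisely the sets whose indicator functions are affine-linear maps $\Ztwo^n\to\Ztwo$, i.e.\ $\emptyset$, $S$, and affine hyperplanes. The intersection $A_i\cap A_j$ has indicator $x_ix_j$, a genuinely quadratic function, so $A_i\cap A_j\notin\subsets$. In a set-representable logic, compatibility of $A_i$ and $A_j$ forces $A_i\cap A_j\in\subsets$ (any Mackey decomposition $A_i=C_1\cup C_3$, $A_j=C_2\cup C_3$ with the $C_k$ disjoint gives $C_3=A_i\cap A_j$). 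Hence already the pair $\{A_i,A_j\}$ is incompatible in your $\subsets$, the hypothesis of the compreg implication is never satisfied, and the example does not witness a failure of compatibility regularity. The dimension count you propose for the incompatibility half is fine; it is the compatibility-of-proper-subfamilies half that collapses.

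The paper repairs this by enlarging $\subsets$ drastically: it takes $\subsets$ to be \emph{all} subsets of $S$ of even cardinality. This is still an EQL (symmetric difference preserves parity), but now the atoms of the Boolean algebra generated by any $k<n$ of the coordinate-halves are intersections of $k$ halves or their complements, each of cardinality $2^{n-k}\ge 2$, hence even and lying in $\subsets$; so every proper subfamily is compatible. The full intersection $A_1\cap\cdots\cap A_n$ is a singleton, of odd cardinality, so it is excluded from $\subsets$ and the full family is incompatible. The moral is that to make proper subfamilies compatible you must let $\subsets$ contain intersections, not just symmetric differences, of the generators --- and a parity constraint on cardinality is the clean way to admit all the ``large'' intersections while excluding the final singleton.
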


\begin{proof}
    Let $\subsets$ be the collection of all subsets of $S$ that have an even
        number of elements.
    The $(S,\subsets)$ is an EQL.
    Let $i\colon\{0,1\}^n\to S$ be an isomorphism and let $\tilde{A}_i$, 
        $i\le n$, be the subsets of $\{0,1\}^n$ of all elements whose $i$-th
        coordinate is $1$.
    Let $A_i=i(\tilde{A}_i)$, $i\le n$.
    The the collection $\{A_i\mid i\le n\}=A$ is not compatible in $(S,\subsets)$.
    Indeed, $\bigcap_{i\le n}A_i$ is a singleton, $\{p\}$, and
        $\{p\}\not\in\subsets$.
    However, each subset $B$, $B\subset A$, with less than $n$ elements is
        compatible in~$\subsets$.
    This is easy to see since the atoms of the Boolean algebra of subsets of
        $S$ generated by $B$ consist of the intersections of the elements of
        $B$ or their complements.
    Because the number of sets of $B$ is strictly less than $n$, all these
        atoms belong to $\subsets$.
\end{proof}

The next considerations take up questions on states on EQLs (see
    e.g.~\cite{Ptak,DeSimoneNavaraPtak,MatousekPtak}).
The basic definition reads as follows.

\begin{definition}\label{def:2.2}
    Let $(S,\subsets)$ be an EQL and let $s\colon\subsets\to\left<0,1\right>$
        be a mapping.
    Then $s$ is said to be a \emph{state} if
        the following three conditions are satisfied:
        \begin{enumerate}
            \item $s(S)=1$,
            \item $s(A\cup B)=s(A)+s(B)$
                provided
                that $A\cap B=\emptyset$,
            \item $s(A\symdif B)\le s(A)+s(B)$.
        \end{enumerate}
\end{definition}

Let us denote by $\SSsubsets$ the sets of all states on
    $\Ssubsets$.
The following theorem adds to the results of~\cite{DeSimoneNavaraPtak}.

\begin{theorem}\label{th:2.3}~
    \begin{enumerate}
        \item If $s\in\SSsubsets$ then for each $A,B\in\subsets$ the following
            inequality holds:
            $\left|s(A)-s(B)\right|\le s(A\symdif B)$.
        A consequence: If $s(B)=0$, then $s(A)=s(A\symdif B)$ for any
            $A\in\subsets$.
        \item The set $\SSsubsets$ is a convex and compact subset of the
            topological linear space $\R^\subsets$.
        A consequence: If $\Ssubsets$ is a Boolean subalgebra of
            $(S,\tilde{\subsets})$ and $s$ is a state on $\Ssubsets$, then $s$
            can be extended over $(S,\tilde{\subsets})$ as a state.
        \item Suppose that $s$ is a subadditive state on $\Ssubsets$ ($s$ is a
            subadditive state if $s$ satisfies the conditions 1. and 2. of
            Def.~\ref{def:2.2} and if for any $A,B\in\subsets$ there is a set
            $C\in\subsets$, $A\cup B\subset C$ with $s(A)+s(B)\ge s(C)$).
        Then $s$ is a state.
        Further, if $\Ssubsets$ is not Boolean and $\mathop{\mathrm{card}}S$ is
            at most countable, then there is a state that is not a subadditive
            state.
    \end{enumerate}
\end{theorem}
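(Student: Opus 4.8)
The plan is to treat the three parts in turn, each reduced to the state axioms of Definition~\ref{def:2.2}.

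For part~1 I would exploit the identity $A\symdif(A\symdif B)=B$. Applying axiom~3 to the pair $A,\ A\symdif B$ gives $s(B)=s\bigl(A\symdif(A\symdif B)\bigr)\le s(A)+s(A\symdif B)$, hence $s(B)-s(A)\le s(A\symdif B)$; by the symmetric computation, using $B\symdif(A\symdif B)=A$, I get $s(A)-s(B)\le s(A\symdif B)$, and the two bounds together yield $\left|s(A)-s(B)\right|\le s(A\symdif B)$. For the stated consequence I would apply this very inequality to the pair $A,\ A\symdif B$: since $A\symdif(A\symdif B)=B$ and $s(B)=0$, the bound forces $\left|s(A)-s(A\symdif B)\right|\le 0$, i.e. $s(A)=s(A\symdif B)$.

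For part~2, convexity is immediate because axioms~1 and~2 are affine equalities and axiom~3 an affine inequality, all preserved under convex combinations. For compactness I would view $\SSsubsets$ inside $\uint^{\subsets}\subseteq\R^{\subsets}$, which is compact by Tychonoff; each defining condition is closed in the product (pointwise) topology, being cut out by continuous evaluation maps, so $\SSsubsets$ is a closed subset of a compact set and hence compact. For the extension consequence I would use this compactness through a finite-intersection argument: for each $A\in\subsets$ the set $\{t\in\mathcal{S}(S,\tilde{\subsets}) : t(A)=s(A)\}$ is closed, and it suffices to show every finite subfamily has nonempty intersection. Given $A_1,\dots,A_m\in\subsets$, I would pass to the finite Boolean subalgebra they generate, choose a point $x_j$ in each (nonempty) atom $p_j$, and form $t=\sum_j s(p_j)\,\delta_{x_j}$, a convex combination of point evaluations $\delta_{x_j}$ (each a state on $\tilde{\subsets}$); since every $A_i$ is a union of atoms, $t$ reproduces $s$ on $A_1,\dots,A_m$. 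Compactness then supplies a common point of all the closed sets, that is, a state on $\tilde{\subsets}$ extending $s$.

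For part~3, the first assertion follows once I observe that a subadditive state is monotone: if $A\subseteq C$ with $A,C\in\subsets$, then $A\symdif C=C\smallsetminus A\in\subsets$ is disjoint from $A$ with union $C$, so axiom~2 gives $s(C)=s(A)+s(C\smallsetminus A)\ge s(A)$. Taking the witnessing $C\supseteq A\cup B$ from the subadditivity hypothesis and using $A\symdif B\subseteq A\cup B\subseteq C$, monotonicity yields $s(A\symdif B)\le s(C)\le s(A)+s(B)$, which is axiom~3. For the final existence assertion I would invoke the characterization that $\Ssubsets$ fails to be Boolean exactly when some $A,B\in\subsets$ have $A\cup B\notin\subsets$. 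Fixing such $A,B$ and writing $U=A\cup B$, I would define $s$ as the restriction to $\subsets$ of a strictly positive probability measure $\mu$ on the at-most-countable set $S\smallsetminus U$ that vanishes on $U$; the restriction of any probability measure is automatically a state. Then $s(A)=s(B)=0$, while every $C\in\subsets$ with $U\subseteq C$ satisfies $C\ne U$ (as $U\notin\subsets$), so $C$ meets $S\smallsetminus U$ and $s(C)>0$; thus no witnessing $C$ exists and $s$ is not subadditive. The main obstacle is precisely this last construction: the countability hypothesis is exactly what allows $\mu$ to be strictly positive at every point of $S\smallsetminus U$, so that each admissible $C$ receives positive mass, and the argument would break for larger $S$ where no such strictly positive probability measure is available.
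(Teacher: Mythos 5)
Your proposal is correct, and in two places it takes a genuinely different route from the paper. In part~1 the paper applies axiom~3 to the pair $A,B'$ and uses the identity $A\symdif B'=(A\symdif B)'$ together with $s(X')=1-s(X)$; you instead apply axiom~3 to the pair $A,\ A\symdif B$ and use the involution $A\symdif(A\symdif B)=B$, which is more direct and also settles the stated consequence in one line. In part~2 the extension result is proved in the paper via Krein--Milman: one reduces to extending the pure (two-valued) states of the Boolean algebra $\Ssubsets$, extends each by a Dirac state chosen inside the unique atom of value $1$ of a finite subalgebra, and then passes to closed convex hulls. Your finite-intersection-property argument instead extends $s$ ``locally'' on each finitely generated subalgebra by the convex combination $\sum_j s(p_j)\,\delta_{x_j}$ of Dirac states and lets compactness of $\mathcal{S}(S,\tilde{\subsets})$ supply the global extension; this avoids Krein--Milman entirely at the cost of a slightly longer local construction, and both arguments ultimately rest on the compactness established in the first half of part~2. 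Part~3 follows the paper's proof: you merely make explicit the monotonicity behind the paper's chain $s(A)+s(B)\ge s(C)\ge s(A\symdif B)$, and your strictly positive measure vanishing on $A\cup B$ is exactly the paper's partition of unity $f$ with $f=0$ on $A\cup B$ --- indeed your insistence on strict positivity off $A\cup B$, and the remark that $S\smallsetminus(A\cup B)\ne\emptyset$ since $A\cup B\notin\subsets$ while $S\in\subsets$, spell out details the paper leaves implicit.
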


\begin{proof}~
    \begin{enumerate}
        \item It suffices to show that $s(B)-s(A)\le s(A\symdif B)$.
            Consider the sets $A$ and $B'$.
            Then $s(A)+s(B')\ge s(A\symdif B')$.
            Since $A\symdif B'=(A\symdif B)'$, we obtain
                \begin{multline*}
                    s(A)+s(B')
                    =
                    s(A)+1-s(B)
                    \ge
                    s(A\symdif B)
                    \\
                    =
                    s\big((A\symdif B)'\big)
                    =
                    1-s(A\symdif B).
                \end{multline*}
            We see that $s(A)-s(B)\ge -s(A\symdif B)$.
            Hence $s(B)-s(A)\le s(A\symdif B)$.
            The rest is obvious.
        \item It is easy to see that if $s,t\in\SSsubsets$ and
            $\alpha\in\uint$, then $\alpha s+(1-\alpha)t\in\SSsubsets$.
            The topological part is a standard consequence of Tychonoff's
            theorem about the product of compact topological spaces being compact: 
            Since $\uint^\subsets = \{f:\subsets \to \uint \}$ understood as the product of copies of the intervals $\uint$ is a compact subspace of $\R^\subsets$ and
            since $\SSsubsets\subset\uint^\subsets$, it suffices to realize
            that a pointwise limit of states is a state.
            This is easy to see since $s$ is defined by a non-sharp inequality.
            Hence $\SSsubsets$ is a closed subspace of $\uint^\subsets$ and
            therefore $\SSsubsets$ is compact, too.

            The consequence on the extensions has been proved
            in~\cite{DeSimoneNavaraPtak} by an ad~hoc method.
            We would like to show another (simpler) way of proving the result.
            We would use the Krein-Milman theorem: Each state $s$ on
            $\Ssubsets$ belongs to the closure of the convex hull of pure
            states.
            Since pure states on the Boolean algebra $\Ssubsets$ are precisely
            the two-valued states, 
            and the closure of the states on $\Ssubsets$ is taken over to
            $(S,\tilde{\subsets})$,
            it suffices to be able to extend only the
            two-valued states.
            It is well known from the theory of Boolean algebras that one only
            needs to extend $s$ on finite Boolean subalgebras of $\Ssubsets$.
            This is easy---if $B$ is a finite Boolean subalgebra of
            $\Ssubsets$ and $A_1,A_2,\dots,A_n$ are its atoms, there is
            precisely one $A_i$, $i\le n$, that $s(A_i)=1$.
            If we pick a point $a_i$, $a_i\in A_i$, and take the Dirac state
            $d_i$ defined by the point $a_i$ ($d_i(D)=1$ exactly when $a_i\in
            D$, otherwise $d_i(D)=0$), then $d_i$ extends $s$ on $B$ over
            $(S,\tilde{\subsets})$.
        \item Obviously, $s(A)+s(B)\ge s(C)\ge s(A\symdif B)$.
            Further, if $S$ is at most countable and $\Ssubsets$ is not
            Boolean, there are two sets $A,B\in\SSsubsets$ such that $A\cup
            B\not\in\subsets$.
            If we take such a partition of unity $f\colon S\to\uint$,
            $\sum_{x\in S}f(x)=1$, that $f(x)=0$ for any $x\in A\cup B$, then
            $f$ defines a state on $\Ssubsets$ that is not a subadditive state.
            (It may be noted in connection with this observation that
            in~\cite{Ptak} the author constructs examples of non-Boolean EQLs
            on which all states are subadditive inventing rather peculiar types
            of states.)
    \end{enumerate}
\end{proof}

Let us now take up $\Ztwo$-valued states on $\Ssubsets$.
In connection with a potential proposition logic on $\Ssubsets$, when $\symdif$
    can be interpreted as a type of XOR operation, an initial question
    to be asked is about the chance for extending these states.
We will address this question in our next result.
Let~us first recall the definition.

\begin{definition}\label{def:2.4}
    Let $\Ssubsets$ be an EQL.
    The mapping $s\colon\subsets\to\{0,1\}$ is said to be a $\Ztwo$-state if
        \begin{enumerate}
            \item $s(S)=1$,
            \item $s(A\symdif B)=s(A)\oplus s(B)$
                (the operation $\oplus$ is adding mod $2$ in the group
                $\Ztwo$).
        \end{enumerate}
\end{definition}

\begin{theorem}\label{th:2.5}
    Let $(S,\subsets')$ be a subEQL of $(S,\subsets)$ and let
        $s\colon\subsets'\to\{0,1\}$ be a $\Ztwo$-state.
    Then $s$ can be extended over $(S,\subsets)$ as a $\Ztwo$-state.
\end{theorem}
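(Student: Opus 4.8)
The plan is to recast the statement as a routine linear-algebra extension problem over the two-element field, after which the result becomes the standard fact that functionals on subspaces always extend.

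First I would observe that $(\subsets,\symdif)$ is an abelian group: symmetric difference is associative and commutative, its neutral element is $\emptyset=S\symdif S$ (which lies in $\subsets$ by closure under $\symdif$), and each $A\in\subsets$ is its own inverse since $A\symdif A=\emptyset$. Because every element satisfies $A\symdif A=\emptyset$, this group is canonically a vector space over $\Ztwo$. Under this identification a $\Ztwo$-state is exactly a $\Ztwo$-linear functional $s\colon\subsets\to\Ztwo$ subject to the single normalization $s(S)=1$: condition~2 of Definition~\ref{def:2.4} is precisely additivity (hence $\Ztwo$-linearity), and it forces $s(\emptyset)=s(A\symdif A)=0$. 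The subEQL $\subsets'$ is then nothing but a $\Ztwo$-subspace of $\subsets$ containing $S$, and $s$ is a linear functional defined on that subspace.

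Next I would extend $s$ as a linear functional. If $\subsets$ is finite-dimensional over $\Ztwo$ (in particular whenever $S$ is finite), I would choose a basis of $\subsets'$, enlarge it to a basis of $\subsets$, and define the extension $\bar{s}$ to agree with $s$ on the original basis vectors and to take the value $0$ on each new basis vector; the unique linear extension then yields $\bar{s}\colon\subsets\to\Ztwo$. For the general case I would invoke Zorn's lemma: order by inclusion of domains the set of $\Ztwo$-linear functionals that are defined on subspaces of $\subsets$ containing $\subsets'$ and restrict to $s$ there. A maximal such functional must be defined on all of $\subsets$, since otherwise one could pick a vector outside its domain, assign it an arbitrary value in $\Ztwo$, and extend linearly, contradicting maximality. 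The crucial simplification compared with the classical Hahn--Banach situation is that over $\Ztwo$ there is no order or norm constraint to respect, so a new basis vector may be sent to any value whatsoever and the extension never fails.

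Finally I would verify that the extension is a $\Ztwo$-state on $\subsets$. Linearity of $\bar{s}$ gives $\bar{s}(A\symdif B)=\bar{s}(A)\oplus\bar{s}(B)$ for all $A,B\in\subsets$, which is condition~2 of Definition~\ref{def:2.4}. Since $S\in\subsets'$ and $\bar{s}$ restricts to $s$ on $\subsets'$, we obtain $\bar{s}(S)=s(S)=1$, so condition~1 is automatically inherited and requires no separate treatment. The only genuine ingredient beyond the reformulation is the set-theoretic extension step (the axiom of choice, via Zorn's lemma) in the infinite case; the main conceptual point, and where I expect the real content to lie, is the recognition that a $\Ztwo$-state is simply a normalized $\Ztwo$-linear functional, which reduces the whole claim to the extendibility of linear functionals over $\Ztwo$.
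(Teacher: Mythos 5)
Your proof is correct and follows essentially the same route as the paper: both identify $(\subsets,\symdif)$ with a vector space over $\Ztwo$ (the paper does this via characteristic functions), recognize a $\Ztwo$-state as a linear form normalized by $s(S)=1$, and extend it by the standard linear-algebra argument. You merely spell out the basis-extension/Zorn's lemma step and the inheritance of the normalization in more detail than the paper does.
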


\begin{proof}
    We first observe that $(S,\subsets)$ (and $(S,\subsets')$ as well)
        can be viewed as a linear space (and a linear subspace) over the field
        $\Ztwo$ (this has been observed in~\cite{Ovchinnikov} in connection
        with the game Nim).
    Indeed, if we view the sets of $\subsets$ as their characteristic functions
        with values is $\Ztwo$, then the sets of $\subsets$ can be identified with a
        linear space---if $\chi_1,\chi_2$ are characteristic functions of $A_1,A_2$
        then $\chi_1\oplus\chi_2$ is a characteristic function of $A_1\symdif A_2$.
    Moreover, in this interpretation a $\Ztwo$-state on $(S,\subsets')$ is a
        linear form.
    This form can be extended over $(S,\subsets)$ by the standard method of
        linear algebra.
    The~extension then becomes the $\Ztwo$-state extension over $\Ssubsets$.
\end{proof}

Let us conclude the section on states by an illustrating example.
It shows that even a state that is simultaneously a $\Ztwo$-state (a candidate
    for a ``hidden  variable'' in EQLs) cannot be obtained as a restriction of a
    Boolean state.

\begin{example}\label{ex:2.6}
    Let $S=\{1,2,\dots,9,10\}$.
    Let
        \begin{align*}
            A & = \{2,3,4,5\}, \\
            B & = \{4,6,8,9\}, \\
            C & = \{1,2,4,8\}, \\
            D & = \{4,5,6,7\}.
        \end{align*}
    Let $(S,\subsets)$ be generated by $A$, $B$, $C$ and $D$.
    Then there is a~two-valued state $s$ on $\subsets$ that is also a $\Ztwo$-state and,
        moreover, $s$ cannot be extended as \mbox{a~two-valued} state over $\exp S$ while $s$
        \underline{can} be extended over $\exp S$ as a $\Ztwo$-state.
\end{example}

    Indeed, set 
        $s(S)=1$,
        $s(A)=0$,
        $s(B)=1$,
        $s(C)=1$,
        $s(D)=1$,
        $s(A\symdif B)=1$,
        $s(A\symdif C)=1$,
        $s(A\symdif D)=1$,
        $s(B\symdif C)=0$,
        $s(B\symdif D)=0$,
        $s(C\symdif D)=0$,
        $s(A\symdif B\symdif C)=0$,
        $s(A\symdif B\symdif D)=0$,
        $s(A\symdif C\symdif D)=0$,
        $s(B\symdif C\symdif D)=1$,
        $s(A\symdif B\symdif C\symdif D)=1$.
    The values of $s$ on the complements are determined by the additivity of
        $s$.
    It can be easily checked that $\Ssubsets$ is correctly defined (it is a
        lattice and it has $32$ elements) and that $s$ is a state as well as a
        \mbox{$\Ztwo$-state}.
    Let us first show that $s$ cannot be extended over $\exp S$.
    Let us argue by a~contradiction.

    If $t$ is such an extension, $t$ must be given by a partition of unity
        $f\colon S\to\uint$ such that $t(X)=\sum_{x\in X}f(x)$.
    We have $t(B\symdif C)=t\big(\{1,2,6,9\}\big)=0$ and, also,
        $t\big(\{1,2,5,6,8,9\}\big)=1$.
    It follows that
        $t\big(\{5,9\}\big)=f(5)+f(9)=1$.
    But $f(5)=t(5)=t\big(\{2,3,4,5\}\big)=0$
        and therefore $f(5)= t\big(\{5\}\big)=1$.
    However, $t\big(\{5\}\big) = t\big(\{2,3,4,5\}\big)=0$, a contradiction.
    (Out of separate interest,
        since $s$ is also a $\Ztwo$-state, it must be expressed by means of the
        points of $S$ (Th.~\ref{th:2.5}).)
    Indeed, if we set
        $$v\big(\{4\}\big) = v\big(\{5\}\big) = v\big(\{6\}\big) = v\big(\{9\}\big)=1$$
        and
        $$v\big(\{1\}\big) = v\big(\{2\}\big) = v\big(\{3\}\big) =
        v\big(\{7\}\big) = v\big(\{8\}\big) = v\big(\{10\}\big) = 0,$$
        we see that $v$ extends $s$ as a $\Ztwo$-state on $\exp S$.

In the second part of the paper we pass to general quantum logics with a
    symmetric difference.
They stand to EQLs like the general quantum logics stand to the
    set-representable ones (see~\cite{HrochPtak,MatousekPtak}).

\begin{definition}
    Let $P=(X,\le,\phantom{\,}^\bot,0,1,\symdif)$, where $P=(X,\le,\phantom{\,}^\bot,0,1)$
        is a orthocomplemented partially ordered set and $\symdif\colon X^2\to X$
        is a binary operation.
    Then $P$ is said to be a \emph{generalized enriched quantum logic} (GEQL)
        if $P$ satisfies the following conditions:
        \begin{align}
            & x\symdif(y\symdif z)=(x\symdif y)\symdif z,  \label{GEQL1}\tag{GEQL1} \\
            & x\symdif 1=x^\bot, {\enskip} 1\symdif x=x^\bot,        \label{GEQL2}\tag{GEQL2}   \\
            & x\le z, {\enskip} y\le z {\enskip} \Rightarrow {\enskip} x\symdif y\le z.  \label{GEQL3}\tag{GEQL3} 
        \end{align}
\end{definition}

It can easily be proved that a GEQL is an EQL if the GEQL
    is set-representable and the symmetric difference is formed set-theoretically (\cite{MatousekPtak}).
Also, the GEQL is an appropriate notion within quantum logics since any GEQL
    is orthomodular.
To indicate a direct proof of the last fact (a detailed analysis can be found (\cite{MatousekPtak})), suppose that $P$ is a GEQL and 
    $x\le y$, $x,y\in P$.

We want to check the orthomodular law: $y=x\lor(y\land x^\bot)$.
We have 
\begin{multline*}
y\land(x\lor(y\land x^\bot))^\bot=
y\land(x^\bot\land(y\land x^\bot)^\bot)\\
=(y\land x^\bot)\land(y\land(y\land x^\bot)^\bot)
\leq(y\land x^\bot)\land(y\land x^\bot)^\bot=0.
\end{multline*}
To show that this implies the orthomodularity, write $z=(x\lor(y\land x^\bot))$.
We~have $z\le y$ and $y\land z^\bot=0$.
By applying~\eqref{GEQL3}, 
    $y\land z^\bot=y\symdif z=0$ and, further, by~\eqref{GEQL1} and the obvious identity $y\symdif y = 0$ we see that
    $$z=z\symdif 0=z\symdif(y\symdif y)=(z\symdif y)\symdif y=0\symdif y=y.$$
This proves the orthomodularity.

Our first result on GEQLs concerns the state space.
Recall that a state on GEQL $P$ is a mapping $s\colon P\to\left<0,1\right>$
    such that:
    \begin{enumerate}
        \item $s(1)=1$,
        \item $s(a\cup b)=s(a)+s(b)$ provided $a\le b^\bot$,
        \item for any $a,b\in P$, $s(a\symdif b)\le s(a)+s(b)$.
    \end{enumerate}
It can again be shown that for any GEQL $P$, the state space
    $\statespace(P)$ of $P$ is a compact convex set (in the natural affine
    structure and pointwise topology).
It seems conjecturable that each compact convex set in a topological linear
    space can be seen as a state space of GEQL.
The following partial result in this line supports this conjecture (we add the
    requirement on non-compatibility---there are QEQLs that have a ``standard''
    state space though a high degree of non-compatibility).

\begin{theorem}\label{th:2.8}
    Let $K$ be a simplex in $\R^n$, $n\in\N$.
    Let $k$ be a cardinal number.
    Then there is a GEQL, $P$, that contains $k$ non-compatible pairs and
        that has $\statespace(P)$ affinely homeomorphic with $K$.
\end{theorem}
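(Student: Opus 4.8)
The plan is to decouple the two demands. Since every simplex $K\subset\R^n$ is affinely homeomorphic to the standard $d$-simplex $\Sigma_d$, where $d=\dim K$, it suffices to build a GEQL $P$ with $\statespace(P)$ affinely homeomorphic to $\Sigma_d$ and with at least $k$ non-compatible pairs. I would take $P$ to be the finite direct product $P=G\times\mathbf 2\times\cdots\times\mathbf 2$ of one non-Boolean block $G$ and $d$ copies of the two-element Boolean GEQL $\mathbf 2=\{0,1\}$. A direct product of GEQLs is a GEQL (associativity, $x^\bot=x\symdif 1$ and~\eqref{GEQL3} are all checked coordinatewise), so it remains only to compute its states and its non-compatible pairs.

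For the state space I would use that a state on a finite direct product is fixed by the way it splits unit mass among the blocks. Writing $u_i$ for the top of the $i$-th block embedded in $P$, the $u_i$ are pairwise orthogonal and join to $1$, so every state $s$ satisfies $\sum_i s(u_i)=1$, and the restriction of $s$ to $[0,u_i]$ is $s(u_i)$ times a normalized state of that block. Hence, if every block possesses a \emph{unique} state, $s$ is determined by the mass vector $(s(u_0),\dots,s(u_d))$ and conversely every such vector defines a state; this yields an affine homeomorphism $\statespace(P)\cong\Sigma_d$ with no spurious extreme points. Since $\mathbf 2$ plainly has a unique state, and since the non-compatible pairs of $G$ survive in the product (compatibility in a product is coordinatewise), the whole theorem reduces to one lemma.

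\textbf{Crux lemma:} for every cardinal $k$ there is a non-Boolean GEQL $G$ with a unique state and at least $k$ non-compatible pairs. First I would observe that $G$ cannot be set-representable: in a set-representable EQL each point of the base set induces a Dirac state, so a single state forces the collection not to separate points and hence to be Boolean---this is exactly why one must leave the EQL world for the GEQL world. To construct $G$ I would exploit the $\Ztwo$-linear picture: by~\eqref{GEQL2} the orthocomplement is $x\mapsto x\symdif 1$, and by~\eqref{GEQL3} every principal ideal, and likewise every block (maximal Boolean subalgebra), is closed under $\symdif$ and hence a $\Ztwo$-subspace. Thus $G$ is a Greechie-type pasting of Boolean blocks realized as subspaces of a common $\Ztwo$-vector space; atoms in different blocks are non-compatible, and a state is an assignment that is additive on each block and subadditive for $\symdif$. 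I would choose the block incidences so that the additivity relations running around the diagram pin every atom value to a single solution, forcing the state to be unique, while keeping arbitrarily many atoms in distinct blocks, and I would let the diagram grow to realize any prescribed $k$.

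The heart of the proof---and the step I expect to be genuinely hard---is this last design. What keeps the state space of such a logic from ballooning into a large polytope is the subadditivity $s(a\symdif b)\le s(a)+s(b)$ together with the two-sided estimate $|s(a)-s(b)|\le s(a\symdif b)$ (the analogue of Theorem~\ref{th:2.3}(1)); for instance the trivial-order GEQL on $\Ztwo^3$ (three complementary pairs, the $MO_3$) already has exactly the $2^{2}=4$ two-valued states, namely the linear forms $f$ with $f(1)=1$, and its state space is the tetrahedron $\Sigma_3$ rather than a cube. The difficulty is to arrange the subspace/block incidences so that these constraints collapse the set of states all the way to a single point while preserving orthomodularity, $\symdif$-closure and non-compatibility, and then to verify that the surviving solution is an honest nonnegative normalized state. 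The set-theoretic bookkeeping for uncountable $k$ (taking the ambient $\Ztwo$-vector space large enough) is routine once the finite pattern is in hand.
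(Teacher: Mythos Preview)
Your product architecture is exactly the paper's: a finite direct product of GEQLs each admitting a \emph{unique} state has state space the standard simplex on as many vertices as there are factors, and non-compatible pairs in one factor survive in the product. The paper happens to take $Q^{n+1}$ with $Q$ a unique-state GEQL rather than your $G\times\mathbf 2^{d}$, but that difference is cosmetic.

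The genuine gap is your crux lemma. You leave the construction of a non-Boolean GEQL $G$ with a unique state and $k$ non-compatible pairs as an open Greechie-style design problem and explicitly flag it as the hard step; nothing in your outline shows that the subadditivity constraints can be made to collapse the state polytope to a single point while keeping orthomodularity and $\symdif$-closure. The paper bypasses this design problem with a two-line trick. It quotes from~\cite{PtakVoracek} the existence of a GEQL $\mathcal P$ with \emph{no} states at all. Then $\mathcal P^{k}$ still has no states (a state would restrict to the embedded copy of $\mathcal P$) and carries at least $k$ non-compatible pairs; and $Q=\mathcal P^{k}\times\{0,1\}$ has \emph{exactly one} state, namely $s(a,0)=0$, $s(a,1)=1$, because any state putting positive mass on the $\mathcal P^{k}$-block would normalise to a state on $\mathcal P^{k}$. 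Thus ``stateless'' plus a single $\mathbf 2$-factor yields ``unique state'' for free, and your hard pasting problem evaporates. Your $MO_3$ computation is a nice illustration of how $\symdif$-subadditivity cuts down the state space, but once the stateless seed is available it is not needed.
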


\begin{proof}
    There is a GEQL, $\mathcal{P}$, such that $\mathop{\mathcal{S}}(\mathcal{P})=\emptyset$ (see~\cite{PtakVoracek}).
    Then the Cartesian product $\mathcal{P}^k$ obviously has at least $k$ non-compatible pairs 
    (GEQLs form a quasivariety~\cite{MatousekPtak}
        and therefore they are closed under the formation of Cartesian products).
    Moreover, $\mathop{\mathcal{S}}(\mathcal{P}^k)=\emptyset$.
    Indeed, $\mathcal{P}$ can be naturally embedded in $\mathcal{P}^k$ and if there is a state on $\mathcal{P}^k$, there is a state on $\mathcal{P}$.
    Further, $\mathcal{P}^k\times\{0,1\}$, where $\{0,1\}$ is viewed as a GEQL, possesses exactly one state---it suffices 
    to set $s(a,0)=0$ and $s(a,1)=1$, $a\in\mathcal{P}^k$.
    Write $Q=P^k\times\{0,1\}$ and consider $Q^{n+1}$, where ${n+1}$ is the number of the extreme points of $K$.
    Then $Q^{n+1}$ has also precisely ${n+1}$ pure states and therefore $\statespace(Q^{n+1})=\statespace(K)$.
\end{proof}

Our final result discusses a way of ``parametrization'' of GEQLs.
In view of the distinguished position of EQLs among GEQLs, it is natural to ask
    which kind of morphisms from
EQLs onto GEQLs guarantees that the images of EQLs are again EQLs.
The investigation of~\cite{PtakWright} indicates that a stronger condition on
    compatibility must be introduced.
We employ the following definition.

\begin{definition}
    Let $P,Q$ be GEQLs and let $p\colon P\to Q$ onto $Q$ be a mapping.
    Let $P$ be GEQL-isomorphic to an EQL (i.e., let $P$ be set-representable).
    Then $p$ is said to be a \emph{parametrization} of $Q$ if
        \begin{enumerate}
            \item $f(1)=1$,
            \item $f(a\lor b)=f(a)\lor f(b)$ provided that $a\le b^\bot$,
            \item if $a\le b$ then $f(a)\le f(b)$,
            \item $f(a\symdif b)=f(a)\symdif f(b)$,
            \item if $a\le b$ in $Q$ and $a=p(A)$, $b=p(B)$, then $A,B$ are compatible.
        \end{enumerate}
\end{definition}

The requirement that $p$ is a parametrization is rather strong.
If e.g. $Q$ is a GEQL and $B$ is a Boolean algebra, then the projection of $Q\times B$ onto $Q$
    is a parametrization, and from this example some other parametrization can be
    constructed.

Before we formulate the result, let us state a simple characterization of EQLs among GEQLs
(see~\cite{MatousekPtak}).

\begin{proposition}\label{prop:GEQL.isomorphic.to.EQL}
    Let $Q$ be a GEQL.
    Then $Q$ is GEQL-isomorphic to an EQL if there is a mapping 
        $m\colon Q\to\exp M$ for a set $M$ such that the following conditions
        are fulfilled:
        \begin{enumerate}
            \item $x\le y^\bot\Leftrightarrow m(x)\le M\smallsetminus m(y)$,
            \item $m(x\symdif_Q y)=m(x)\symdif m(y)$, where the right hand side is the set
                symmetric difference.
        \end{enumerate}
\end{proposition}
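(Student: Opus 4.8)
The plan is to prove the biconditional characterization, though as stated the Proposition only asserts one direction ("if"), so I would focus on constructing a GEQL-isomorphism from the existence of such a map $m$, while noting the converse is immediate. The key observation is that an EQL is precisely a set-representable GEQL whose symmetric difference is the set-theoretic one, so my goal is to exhibit a set $M$ and an injective structure-preserving map of $Q$ onto a collection $\subsets$ of subsets of $M$ satisfying conditions (I.) and (II.) of the opening Definition.

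First I would show that the hypothesized map $m$ is injective. This should follow from condition 1: if $m(x)=m(y)$, I would derive $x\le y^\bot$ and $y\le x^\bot$ type relations, or more directly use condition 2 together with the identity $x\symdif x=0$ (valid in any GEQL by \eqref{GEQL1} and \eqref{GEQL2}) to get $m(x)\symdif m(y)=m(x\symdif_Q y)$, so that $m(x)=m(y)$ forces $m(x\symdif_Q y)=\emptyset$; I would then argue that the only element mapping to $\emptyset$ is $0$, whence $x\symdif_Q y=0$ and so $x=y$ (using that $x\symdif_Q y=0 \Leftrightarrow x=y$, provable from orthomodularity as in the GEQL preamble). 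Establishing that $m(0)=\emptyset$ and that nothing else maps to $\emptyset$ is where condition 1 does its real work, since it pins down the orthocomplementation and hence the bottom element.

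Next I would verify that $\subsets := m(Q)$ is genuinely an EQL, i.e. closed under set-theoretic symmetric difference and containing $M$. Closure under $\symdif$ is exactly condition 2: for $m(x),m(y)\in\subsets$ their set symmetric difference equals $m(x\symdif_Q y)\in\subsets$. That $M\in\subsets$ (condition (I.) of the Definition) I would get from $m(1)=M$, which should again fall out of condition 1 by taking $y=1$ and using $1^\bot=0$. I would then check that $m$ transports the order and orthocomplementation correctly: condition 1 says $m$ sends the "orthogonality" relation $x\le y^\bot$ to disjointness in $M$, and combined with \eqref{GEQL2} ($x\symdif 1 = x^\bot$) and condition 2, this forces $m(x^\bot)=M\smallsetminus m(x)$, so complements are preserved. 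From orthogonality plus complements one recovers the partial order itself, since $x\le y$ iff $x\le (y^\bot)^\bot$ iff $m(x)\subseteq m(y)$.

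The main obstacle I expect is not any single step but the bookkeeping needed to recover the \emph{full} order relation and the join operation from the impoverished data—condition 1 only gives me the orthogonality relation directly, and I must bootstrap from it to show that the abstract partial order of $Q$ matches inclusion on $\subsets$, using the interplay of \eqref{GEQL1}--\eqref{GEQL3} to guarantee that disjoint unions in $M$ correspond to the GEQL joins $x\lor y$ for orthogonal $x,y$. Once the order, complementation, and symmetric difference are all shown to correspond, the map $m\colon Q\to\Ssubsets$ is a bijective morphism preserving all the GEQL structure, hence a GEQL-isomorphism onto the EQL $(M,\subsets)$, completing the proof.
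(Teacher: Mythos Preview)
The paper does not actually prove this proposition; it is merely stated with a citation to~\cite{MatousekPtak}, so there is no ``paper's own proof'' to compare against and your sketch must stand on its own.

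Your outline is essentially sound, but there is one concrete slip. You assert that $m(1)=M$ ``should fall out of condition~1 by taking $y=1$'', and this is false. Taking $y=1$ in condition~1 yields $x\le 0 \Leftrightarrow m(x)\subseteq M\smallsetminus m(1)$, which for $x=0$ is vacuous and for $x\ne 0$ only says $m(x)\cap m(1)\ne\emptyset$; neither forces $m(1)=M$. A counterexample: $Q=\{0,1\}$, $M=\{a,b\}$, $m(0)=\emptyset$, $m(1)=\{a\}$ satisfies both conditions yet $m(1)\ne M$. The repair is immediate: take $m(1)$, not $M$, as the underlying set of the target EQL. From condition~1 with $y=x$ one gets $m(x)\cap m(x^\bot)=\emptyset$, and from condition~2 with $y=1$ one gets $m(x)\symdif m(x^\bot)=m(1)$; together these give $m(x)\cup m(x^\bot)=m(1)$, so every $m(x)$ lies inside $m(1)$ and $m(x^\bot)=m(1)\smallsetminus m(x)$. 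With this correction the rest of your plan goes through: injectivity (via $m(w)=\emptyset\Rightarrow w\le w^\bot\Rightarrow w=0$, using \eqref{GEQL3} and $w\symdif w^\bot=1$), closure of $m(Q)$ under set symmetric difference, and the order correspondence $x\le y\Leftrightarrow m(x)\subseteq m(y)$ all follow as you indicated, yielding the GEQL-isomorphism onto the EQL $\big(m(1),m(Q)\big)$.
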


We can spell out the following result.

\begin{theorem}
    Let $p\colon P\to Q$ be a parametrization between two GEQLs.
    Let $P$ be GEQL-isomorphic to an EQL.
    Then $Q$ is GEQL-isomorphic to an~EQL, too.
\end{theorem}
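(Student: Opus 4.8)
The plan is to use the characterization of EQLs among GEQLs given in Proposition~\ref{prop:GEQL.isomorphic.to.EQL}. Since $P$ is GEQL-isomorphic to an EQL, by that proposition there is a set $M$ and a map $m\colon P\to\exp M$ satisfying the two conditions (the order-orthogonality condition and the symmetric-difference-preservation condition). My goal is to manufacture, out of $m$ and the parametrization $p$, a map $\widetilde{m}\colon Q\to\exp M$ satisfying the same two conditions, which by the proposition will certify that $Q$ is GEQL-isomorphic to an EQL.

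The first idea I would try is to define $\widetilde{m}$ directly by pushing $m$ through a section of $p$: since $p$ is onto, for each $q\in Q$ choose some $P$-element $a$ with $p(a)=q$ and set $\widetilde{m}(q)=m(a)$. The obvious worry is well-definedness, since $p$ need not be injective. So the first real step is to show that $m(a)$ depends only on $p(a)$: if $p(a)=p(b)$, then $\widetilde{m}$ should give the same answer. Here is where property~4 of a parametrization (that $p$ preserves $\symdif$) together with (GEQL2) does the work: $p(a)=p(b)$ forces $p(a\symdif b)=p(a)\symdif p(b)=0$ in $Q$, and I would use this to conclude $m(a)=m(b)$, likely by first establishing that $m$ sends the zero element of $P$ to the empty set and that $p(c)=0$ implies $m(c)=\emptyset$. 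That auxiliary fact, $p(c)=0 \Rightarrow m(c)=\emptyset$, is the crux of well-definedness, and I expect it to follow by combining condition~5 of the parametrization (compatibility of preimages of comparable elements, applied with $a=0\le b$) with the representation properties of $m$.

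With $\widetilde{m}$ well-defined, I would then verify the two conditions of Proposition~\ref{prop:GEQL.isomorphic.to.EQL} for $\widetilde{m}$. Condition~2 (preservation of $\symdif$) transfers cleanly: given $q_1,q_2\in Q$ with chosen preimages $a_1,a_2$, I have $p(a_1\symdif a_2)=q_1\symdif q_2$ by property~4, so $\widetilde{m}(q_1\symdif q_2)=m(a_1\symdif a_2)=m(a_1)\symdif m(a_2)=\widetilde{m}(q_1)\symdif\widetilde{m}(q_2)$, using the corresponding property of $m$. Condition~1 (the orthogonality equivalence $x\le y^\bot \Leftrightarrow m(x)\le M\smallsetminus m(y)$) is the harder direction and is where I expect the real obstacle to lie: the forward implication should descend from $P$ using properties~2 and~3 of the parametrization, but the reverse implication---recovering an order relation in $Q$ from set inclusion in $\exp M$---is exactly where property~5 is indispensable. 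The compatibility guaranteed by condition~5 is what lets me pass from a purely set-theoretic inclusion back to a genuine orthogonality in $Q$, presumably by working inside a common Boolean subalgebra where inclusion and order coincide.

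The main obstacle, then, is condition~1 in the $\Leftarrow$ direction of the proposition: set inclusion $\widetilde{m}(x)\le M\smallsetminus \widetilde{m}(y)$ must be translated into $x\le y^\bot$ in $Q$, and since $Q$ is generally noncompatible this translation can fail without extra hypotheses---which is precisely why the definition of parametrization builds in the compatibility requirement~5. I would isolate this into a lemma stating that whenever $\widetilde{m}(x)$ and $\widetilde{m}(y)$ are disjoint in $\exp M$, the chosen preimages are compatible in $P$ and hence $x,y$ sit inside a Boolean block of $Q$ on which $\widetilde{m}$ is an ordinary set-representation, so that disjointness of images forces orthogonality of originals. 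Once this lemma is in hand, the remaining verifications are routine transfers of the already-known properties of $m$ through the surjection $p$, and Proposition~\ref{prop:GEQL.isomorphic.to.EQL} then delivers the conclusion that $Q$ is GEQL-isomorphic to an EQL.
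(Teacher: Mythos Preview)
Your plan has a genuine gap at the well-definedness step. You want to set $\widetilde{m}(q)=m(a)$ for any $a$ with $p(a)=q$, and you hinge this on the auxiliary claim ``$p(c)=0$ implies $m(c)=\emptyset$''. That claim is false. From condition~1 of Proposition~\ref{prop:GEQL.isomorphic.to.EQL} one gets $m(c)=\emptyset\Leftrightarrow c=0$ in $P$, so your claim would force the kernel of $p$ to be trivial. But a parametrization may have a large kernel: the paper's own example, the projection $Q\times B\to Q$ with $B$ a nontrivial Boolean algebra, sends every $(0,b)$ to $0$ while $m(0,b)\ne\emptyset$ whenever $b\ne 0$. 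Condition~5 only tells you that all kernel elements are mutually compatible (indeed central), not that they vanish under $m$.

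The paper's proof fixes exactly this by \emph{quotienting}. Taking $P=(S,\Delta)$, it lets $I=\{A\in\Delta\mid p(A)=0\}$, forms the ideal $J$ in $\exp S$ generated by $I$, and works in the quotient Boolean algebra $B=\exp S/J$ (which is again set-representable by Stone). The map $m\colon Q\to B$ is then $a\mapsto n(e(A))$ for any $A$ with $p(A)=a$. Well-definedness now goes through: if $p(A)=p(B)$, condition~5 gives compatibility of $A,B$, so $A\symdif B=(A\cap B')\cup(B\cap A')\in\Delta$; since $p$ preserves $\symdif$ this lands in $I\subset J$, whence $n(e(A))=n(e(B))$. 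Your outline becomes correct once you insert this quotient step; without it, $\widetilde{m}$ is simply not a function.
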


\begin{proof}
    We will use Prop.~\ref{prop:GEQL.isomorphic.to.EQL}.
    Let $P$ be GEQL-isomorphic to an EQL.
    So we can assume $P=(S,\subsets)$.
    Consider the Boolean algebra $\exp S$ of all subsets of $S$.
    Let $e\colon\subsets\to\exp S$ be the natural embedding.
    Write $I=\{A\in\subsets\mid p(A)=0\}$.
    Then $I$ is compatible in $P$ and therefore can be viewed as a subset of a Boolean subalgebra of $\exp S$.
    Write $J=\{D\in\exp S\mid D\subset A\}$ for some $A\in I$.
    Then $J$ is an ideal in $\exp S$ and therefore $B=\exp S/J$ is a Boolean algebra.
    Let us denote the factor mapping by $n$, $n\colon\exp S\to B$.
    Obviously, $B$ is also a collection of subsets of a set by the Stone theorem.

    Let us denote by $m\colon Q\to B$ the following mapping.
    For each $a\in Q$, let us choose an $A$, $A\in\Delta$, such that $p(A)=a$,
        and set $m(a)=m(p(A))=n(e(A))$.
    We are going to show that $m$ is correctly defined.
    Indeed, suppose that $a=p(A)=p(B)$.
    Then $A,B$ are compatible in $(S,\subsets)$ and therefore 
        $(A\cap B')\cup(B\cap A')\in\subsets$
        (we write $B'$ instead of $S\smallsetminus B$).
    By a Boolean calculus, 
        $$p\Big((A\cap B')\cup(B\cap A')\Big)=0$$ and this implies
        that $(A\cap B')\cup(B\cap A')\in J$.
    It means that $n(e(A))=n(e(B))$ and so $m$ is defined correctly.

    It is not difficult to show that $m\colon Q\to B$ fulfils all conditions of
        Prop.~\ref{prop:GEQL.isomorphic.to.EQL}.
    Let us for instance check that if $a,b\in Q$, $a\le b'$, then
        $m(a)\le\big(m(b)\big)'$,
        the~other conditions can be verified analogously.
    Suppose $a=p(A)\le b'=p(B')$.
    Since $A,B'$ are compatible and $p(B)\le \big(p(A)\big)'=p(A')$,
        the intersection $B\cap A'$ belongs to $\subsets$
        and $p(B)=p(B\cap A')$.
    We see that
        \begin{multline*}
            m(a)=m(p(A))=n(e(A))\le n(e(B\cap A')')=m(p(B\cap A')') \\
            =m(p(B\cap A'))'
            =m(p(B))'=m(b')=(m(b))'.
        \end{multline*}
\end{proof}

Let us in conclusion shortly comment on the position of ``enriched quantum logics'', EQL (and their generalized forms GEQL) considered in this paper within the quantum logics, QL. The study of QL (QL = orthomodular poset) has taken place in the traditional projection area (\cite{BirkhoffNeumann, Hamhalter, PtakWeber}, etc.), in the specific orthomodular combinatorics (\cite{Greechie, Voracek, Svozil}, etc.) and in the set-representable structures close to Boolean algebras (\cite{DeSimoneNavaraPtak, Harding, Ptak}, etc.). The areas overlap, of course (\cite{DvurecenskijPulmannova, Gudder, Redei}, etc.). In this paper we want to further contribute to the third direction---we ``enrich'' the set-representable QL with an XOR type operation, obtaining thus a kind of ``almost Boolean'' QL (\cite{HrochPtak, MatousekPtak, Su}, etc.). This stands to reason as long as the (non)compatibility relation is concerned (Th.~\ref{th:2.1}) and it opens new types of investigations (Th.~\ref{th:2.3}, Th.~\ref{th:2.5} and Th.~\ref{th:2.8}). The further research may bring interesting results in QL, in particular in the investigation of the state space (is an arbitrary compact convex set the state space of a GEQL?) and, possibly, in a generalized QL-line of ``mathematical logic'' (having proved Th.~\ref{th:2.3}, can we think of a kind of propositional logic in EQL?).

\section*{Acknowledgement}

The second author acknowledges the support by the Austrian Science Fund
    (FWF):Project I 4579-N and the Czech Science Foundation (GACR):Project
    20-09869L.
    
The final publication is available at Springer via https://doi.org/10.1007/s10773-021-04950-6.

\begin{minipage}[t]{0.5\textwidth}
    \centering
    \textbf{Dominika Bure\v{s}ov\'{a}}\\
    \smallskip
    Czech Technical University in Prague,\\
    Faculty of Electrical Engineering,\\
    Prague, Czech Republic\\
    \smallskip
    \texttt{buresdo2@fel.cvut.cz}
\end{minipage}
\qquad
\begin{minipage}[t]{0.5\textwidth}
    \centering
    \textbf{Pavel Pt\'{a}k}\\
    \smallskip
    Czech Technical University in Prague,\\
    Faculty of Electrical Engineering,\\
    Department of Mathematics,\\
    Prague, Czech Republic\\
    \smallskip
    \texttt{ptak@math.feld.cvut.cz}
\end{minipage}

\end{document}